\newtheorem{theorem}{Theorem}
\newtheorem{lemma}[theorem]{Lemma}
\theoremstyle{nonumberplain}
\newtheorem{proof}{Proof}
\crefname{algorithm}{Algorithm}{Algorithm}
\crefname{part}{Part}{Parts}
\crefname{step}{Step}{Steps}
\newcommand{\For}[1]{\textbf{For} #1}
\newcommand{\While}[1]{\textbf{While} #1}
\newcommand{\Input}[1]{\textbf{Input:} #1\\}
\newcommand{\Output}[1]{\textbf{Output:} #1\\}
\newcommand{\keyword}[1]{\textit{#1}}
\renewcommand{\paragraph}[1]{\textbf{\textit{#1}}}
\newcommand{\TODO}[2][\@empty]{%
  \begingroup%
    \def\@tmpa{#2}%
    \ifx\@tmpa\@empty%
      \color{red}\textup{\textrm{[TODO]}}%
    \else%
      \ifx #1\@empty%
        \color{red}\textup{\textrm{[TODO: #2]}}%
      \else%
        \color{red}\textup{\textrm{[TODO~(#1): #2]}}%
      \fi%
    \fi%
  \endgroup}
\newcommand*{\balancecolsandclearpage}{%
  \close@column@grid
  \clearpage
  \twocolumngrid}
\begin{document}

\title{Preparing topological PEPS on a quantum computer}

\author{Martin Schwarz}
\affiliation{Vienna Center for Quantum Science and Technology,
  Faculty of Physics, University of Vienna, Vienna, Austria}
\author{Toby S.\ Cubitt}
\affiliation{Departamento de An\'alisis Matem\'atico, Universidad Complutense
  de Madrid, Plaza de Ciencias 3, Ciudad Universitaria, 28040 Madrid, Spain}
\author{Kristan Temme}
\affiliation{Center for Theoretical Physics,
  Massachusetts Institute of Technology,
  77 Massachusetts Avenue, Cambridge MA 02139-4307, USA}
\author{Frank Verstraete}
\affiliation{Vienna Center for Quantum Science and Technology,
  Faculty of Physics, University of Vienna, Vienna, Austria}
\author{David Perez-Garcia}
\affiliation{Departamento de An\'alisis Matem\'atico, Universidad Complutense
  de Madrid, Plaza de Ciencias 3, Ciudad Universitaria, 28040 Madrid, Spain}

\begin{abstract}
  Simulating of exotic phases of matter that are not amenable to classical
  techniques is one of the most important potential applications of quantum
  information processing. We present an efficient algorithm for preparing a
  large class of topological quantum states -- the G\nobreakdash-injective
  Projected Entangled Pair States (PEPS) -- on a quantum computer. Important
  examples include the resonant valence bond (RVB) states, conjectured to be
  topological spin liquids. The runtime of the algorithm scales polynomially
  with the condition number of the PEPS projectors, and inverse-polynomially
  in the spectral gap of the PEPS parent Hamiltonian.
\end{abstract}

\maketitle

Creating and studying exotic phases of matter is one of the most challenging
goals in contemporary physics. The increasingly sophisticated simulation
abilities of systems such as cold atoms in optical lattices, trapped ions or
superconducting qubits make this challenge accessible by means of Feynmann's
original idea of using highly controllable quantum systems in order to
simulate other quantum systems. Among those exotic phases, non-abelian
topologically ordered states and topological spin liquids -- such as
resonating valence bond (RVB) states in frustrated lattices -- are probably
the holy grails of this area of quantum state engineering. Progress on the
creation of such exotic phases in various experimental systems has accelerated
rapidly in recent years, including cold atoms~\cite{coldatoms}, ion
traps~\cite{iontraps}, photonic devices~\cite{photonic} and superconducting
devices~\cite{superconducting}.

Recently~\cite{Martin}, a very general way of constructing quantum states on a
quantum computer was proposed. The wide applicability of the method lies in
the fact that there is a variational class of quantum states, called
Projective Entangled Pair States (PEPS), which has a simple local description
but is nonetheless complex enough to approximate the low-energy sector of
local Hamiltonians. (A review of the analytical and numerical evidence for
this can be found in~\cite{SCPG10} and the references therein.) However, a
crucial technical assumption in the main result of~\cite{Martin}, called
`injectivity', excludes any possibility of constructing quantum states with
topological order.

The main aim of this article is to significantly extend the result of
Ref.~\cite{Martin} to include exotic topological quantum phases, by
proving:\\
\noindent\paragraph{Main result} For any finite group $G$, a
G\nobreakdash-injective PEPS can be prepared on a quantum computer in
polynomial time, 
when the inverse ground state gap of the associated parent Hamiltonian scales
at most polynomially in the system size.

`G\nobreakdash-injectivity', introduced only recently in~\cite{SCPG10} (and
explained more fully below), is a substantially weaker requirement than
injectivity, which explicitly allows for topological order. A compelling
example of the significance of this result is the very recently proven fact
that the resonating valence bond (RVB) state in the Kagome lattice
(conjectured to be a topological spin liquid), is a $\mathbb{Z}_2$-injective
PEPS, with numerical evidence that the gap assumption is also
verified~\cite{PSPGC12}. Our result therefore gives one way in which the RVB
state (and other topological states) can be prepared efficiently on a general
quantum simulator. Engineering exotic quantum states by quantum simulation
complements research aimed at finding materials that directly exhibit
topological behaviour, and is already beginning to bear fruit
experimentally~\cite{coldatoms,iontraps,photonic,superconducting}.

In the following section, we summarise basic notions of PEPS required in this
work, and introduce the class of G\nobreakdash-injective PEPS which includes
many of the important topological quantum states. We then briefly review the
algorithm of Ref.~\cite{Martin} for preparing injective (non-topological)
PEPS, before proceeding to show how this algorithm can be extended to the much
larger class of G\nobreakdash-injective PEPS, thereby allowing efficient
preparation of many exotic topological quantum states. Finally, we close with
some concluding remarks and open questions.

\paragraph{Projected Entangled Pair States}
For simplicity, we will focus in this letter on PEPS defined on a square
lattice, but the results can be generalized to other lattices. An
(unnormalized) PEPS can be described as follows. Place maximally entangled
states of dimension $D$ along all edges of the lattice. To each vertex $\nu$,
apply a linear map $A^{\nu}: (\CC^D)^{\otimes 4} \rightarrow \CC^d$ in the
four $D$-dimensional systems labeled by $l,t,r,b$ (for `left', `top', `right'
and `bottom'), where $A^\nu=\sum_{i;l,t,r,b} A^\nu_{i;ltrb}\ketbra{i}{ltrb}$.
The resulting vector in $(\CC^d)^{\otimes N}$ is the unnormalized PEPS, $N$
being the number of vertices in the lattice. For the purposes of this work,
since local unitaries do not change the complexity of preparing a state, by
taking the polar decomposition of $A$ we can assume without loss of generality
that $A$ is positive-semidefinite. When $A$ is invertible, we call the PEPS
\keyword{injective}~\cite{SCPG10}.

A particularly interesting class of PEPS is the class of G\nobreakdash-isometric PEPS,
defined for any finite group $G$ as follows. Take a semi-regular
representation of $G$ \cite{SCPG10} -- that is, a representation $U_g=
\oplus_\alpha V_g^{\alpha} \otimes \mathbbm{1}_{r_\alpha}$ having at least one
copy of each irrep $\alpha$. Note that the regular representation is exactly
the one for which $r_\alpha$ is the dimension $d_\alpha$ of the irrep
$V_g^{\alpha}$ for all $\alpha$. We can define the re-weighting map
\begin{equation}\label{eq:delta}
  \Delta = \oplus_\alpha
    \left(\frac{d_\alpha}{r_\alpha}\right)^{\frac{1}{4}}\mathbbm{1}_{d_\alpha}
    \otimes \mathbbm{1}_{r_\alpha}
\end{equation}
which is real, diagonal, commutes with $U_g$ and satisfies
$\tr{\Delta^4U_g}=\abs{G}\delta_{g,e}$. (For the regular representation
$\Delta = \mathbbm{1}$.) The PEPS is then defined by taking, for all $\nu$:
\begin{equation}\label{eq:proj}
  A^{\nu}
  = \frac{1}{\abs{G}}\sum_{g\in G}\Delta\bar{U_g}
    \ox \Delta\bar{U_g} \ox \Delta U_g\otimes  \Delta U_g\,.
\end{equation}
G\nobreakdash-isometric PEPS were originally defined in~\cite{SCPG10} only for
the regular representation, and shown in that case to be exactly the quantum
double models of Kitaev~\cite{kitaev2003}. Here, we generalise the definition
of G\nobreakdash-isometric to any semi-regular representation~\footnote{The
  generalisation of `G-isometric to any semi-regular representation is
  justified by the fact that, for a given $G$, all G-isometric PEPS are
  equivalent (up to isometries) to the G-isometric PEPS for the regular
  representation by decomposing the tensor in a different way. A proof is
  given in the Appendix.}. If, on top of a G\nobreakdash-isometric PEPS, we
apply a further invertible (and w.l.o.g.\ positive-definite) linear map
$A^{\nu}:\CC^d\rightarrow \CC^d$, we obtain a `G\nobreakdash-injective'
PEPS~\cite{SCPG10}. (Here, $d$ is the dimension of the symmetric subspace
associated with the group.) The parallel with plain injective PEPS is clear.
Both are defined by invertible maps on top of a G\nobreakdash-isometric PEPS.
In the case of injective PEPS, the group is the trivial one and the
representation is simply $\id_d$ ($d$ copies of the left-regular
representation of the trivial group).

G\nobreakdash-isometric PEPS have very nice properties, coming from their
topological character, which are inherited by the more general
G\nobreakdash-injective PEPS. For instance, for each G\nobreakdash-isometric
PEPS $\ket{\psi}$ there exists a local frustration-free Hamiltonian (called
the PEPS ``parent Hamiltonian''~\cite{SCPG10}), consisting of commuting
projectors and having as ground space the subspace (over-)spanned by
$\{\ket{\psi;K}: K=(g,h), [g,h]=0\}$. (Here, $\ket{\psi;K}$ is the PEPS
obtained by the same maps $A$, except that we first apply an additional
$U_g^{\otimes V}$ to exactly one vertical strip $V$ and $U_h^{\otimes H}$ to
exactly one horizontal strip $H$ in the initial collection of maximally
entangled states~\cite{SCPG10}). This generalises to G\nobreakdash-injective
PEPS, except that the local Hamiltonian terms are no longer necessarily
commuting projectors.

We will denote by $\ket{A^1\cdots A^t}$ the G\nobreakdash-injective PEPS
defined by applying the map $A^j$ to vertex $j$ for $j=1,\dots,t$ (and
identity to the rest of the vertices) on top of the G\nobreakdash-isometric
PEPS, and define the states $\ket{A^1\cdots A^t; K}$ analogously to above,
which again (over-)span the ground space of a frustration-free local parent
Hamiltonian $H_t$.

\paragraph{Preparing injective PEPS}
We first briefly review the algorithm of~\cite{Martin} for preparing injective
PEPS on a quantum computer. Let $H_t$ be the parent Hamiltonian of the
partially constructed state $\ket{A^1\cdots A^t}$. The algorithm starts at
$t=0$ with maximally entangled states between all pairs of adjacent sites in
the lattice, and proceeds by successively projecting onto the ground states of
$H_t$ for $t=1\dots N$ until the final state $\ket{A^1\cdots A^N}$ is reached.

Since the ground state $P_t$ of $H_t$ is a complex, many-body quantum state,
it is not immediately clear (i)~how to efficiently perform the projective
measurement $\{P_t,P_t^\perp\}$ onto the ground state. Furthermore,
measurement in quantum mechanics is probabilistic, so even if this measurement
can be performed, it is not at all clear (ii)~how to guarantee the desired
outcome $P_t$.

The answer to~(i) is to run the coherent quantum phase estimation
algorithm~\cite{NC00,KOS07} for the unitary generated by time-evolution
under $H_t$. (Time-evolution under the local Hamiltonian $H_t$ can be
simulated efficiently by standard Hamiltonian simulation
techniques~\cite{BACS07}.) If $\sum_k\alpha_k\ket{\psi_k}$ is the
initial state expanded in the eigenbasis of $H_t$, then the phase estimation
entangles this register with an output register containing an estimate of the
corresponding eigenvalue: $\sum_k\alpha_k\ket{\psi_k}\ket{E_k}$. Performing a
partial measurement on the output register to determine if its value is less
than $\Delta_t$ (the spectral gap of $H_t$) completes the implementation of
the measurement $\{P_t,P_t^\perp\}$. (See~\cite{Martin} for full details.)

The solution to~(ii) is more subtle, and makes use of Camille Jordan's lemma
of 1875 on the simultaneous block diagonalization of two projectors, which we
first recall (see also the related CS~decomposition~\cite{Golub+vanLoan}):
\begin{lemma}[Jordan \cite{jordan1875}]\label{Jordan}
  Let $R$ and $Q$ be two projectors with rank $s_r = \rank R$ and $s_q = \rank
  Q$ respectively. Then both projectors can be decomposed simultaneously in
  the form
  \begin{equation}
    R = \bigoplus_{k=1}^{s_r} R_k \quad Q = \bigoplus_{k=1}^{s_q} Q_k,
  \end{equation}
  where $R_k,Q_k$ denote rank-1 projectors acting on one- or two-dimensional
  subspaces. The eigenvectors
  $\ket{r_k},\ket{r_k^\perp}$ and $\ket{q_k},\ket{q_k^\perp}$ of the $2 \times
  2$ projectors $R_k$ and $Q_k$ are related by
  \begin{align*}
    \ket{r_k}\;
      &= \sqrt{d_k}\ket{q_k} + \sqrt{1-d_k} \ket{q_k^\perp}\\
    \ket{r_k^{\perp}}
      &= -\sqrt{1-d_k} \ket{q_k} + \sqrt{d_k}\ket{q_k^\perp}\\
    \ket{q_k}\;
      &= \sqrt{d_k}\ket{r_k} - \sqrt{1-d_k} \ket{r_k^\perp}\\
    \ket{q_k^{\perp}}
      &= \sqrt{1-d_k}\ket{r_k} + \sqrt{d_k} \ket{r_k^\perp}.
  \end{align*}
\end{lemma}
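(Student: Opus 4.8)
The plan is to extract the entire simultaneous block structure from the single self-adjoint operator $RQR$, whose spectrum encodes the relative geometry of the two projectors. First I would observe that $RQR$ is positive semidefinite, that its range lies inside $\text{range}(R)$, and that $0 \le RQR \le R$, so on $\text{range}(R)$ it possesses a complete orthonormal eigenbasis $\{\ket{r_k}\}$ with eigenvalues $d_k \in [0,1]$. These $\ket{r_k}$ will be the $R$-eigenvectors of the lemma, and the $d_k$ will be exactly the mixing parameters in the stated relations; indeed $\langle r_k | Q | r_k\rangle = \langle r_k | RQR | r_k \rangle = d_k$ because $R\ket{r_k} = \ket{r_k}$.

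The heart of the argument is to build, for each eigenvalue $d_k \in (0,1)$, a two-dimensional subspace invariant under both $R$ and $Q$. I would set $\ket{q_k} = d_k^{-1/2}\, Q\ket{r_k}$, which the identity above shows is a unit vector in $\text{range}(Q)$. A direct computation then gives $Q\ket{r_k} = \sqrt{d_k}\,\ket{q_k}$ and, using $RQR\ket{r_k}=d_k\ket{r_k}$, also $R\ket{q_k} = \sqrt{d_k}\,\ket{r_k}$; hence $\text{span}\{\ket{r_k},\ket{q_k}\}$ is closed under both projectors. Defining $\ket{r_k^\perp}$ and $\ket{q_k^\perp}$ as the unit vectors completing $\ket{r_k}$ and $\ket{q_k}$ to orthonormal bases of this plane, one reads off that $R$ restricts to the rank-one projector onto $\ket{r_k}$ and $Q$ to the rank-one projector onto $\ket{q_k}$, and expanding $\ket{r_k}$ in the $\{\ket{q_k},\ket{q_k^\perp}\}$ basis reproduces exactly the four change-of-basis relations in the statement.

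It then remains to treat the degenerate eigenvalues and the leftover subspace. For $d_k = 1$ one has $Q\ket{r_k} = \ket{r_k}$, giving a one-dimensional subspace on which $R$ and $Q$ coincide, while for $d_k = 0$ one has $Q\ket{r_k} = 0$, giving a one-dimensional subspace in $\text{range}(R)\cap\ker Q$; both are degenerate limits of the $2\times 2$ blocks. This exhausts $\text{range}(R)$; applying the symmetric construction to $QRQ$ on $\text{range}(Q)$ picks up the complementary one-dimensional pieces in $\text{range}(Q)\cap\ker R$, and whatever survives lies in $\ker R\cap\ker Q$, where both projectors vanish and the splitting into one-dimensional blocks is arbitrary. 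Collecting the planes and lines yields the claimed simultaneous direct-sum decomposition.

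I expect the main obstacle to be bookkeeping rather than anything conceptual: verifying that the planes built from distinct eigenvectors are mutually orthogonal — which follows from orthogonality of the $\ket{r_k}$ together with $\ket{q_k}\propto Q\ket{r_k}$, $\langle r_k | Q | r_j\rangle = d_j\langle r_k | r_j\rangle$, and self-adjointness of $RQR$ — and confirming that the blocks coming from $RQR$ and from the symmetric $QRQ$ construction genuinely agree on their shared two-dimensional subspaces (one checks $QRQ\ket{q_k}=d_k\ket{q_k}$) rather than being double-counted. Some care is also needed at the boundary values $d_k\in\{0,1\}$ so as to present them uniformly as the limiting rank-one blocks promised by the statement.
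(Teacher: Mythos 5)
Your proof is correct. Note that the paper itself offers no proof of this lemma --- it is quoted as a classical result of Jordan, with a pointer to the closely related CS~decomposition --- so there is no in-paper argument to compare against. What you give is the standard modern derivation: diagonalise $RQR$ on the range of $R$, use $\bra{r_k}Q\ket{r_k}=d_k$ to normalise $\ket{q_k}\propto Q\ket{r_k}$, verify that each plane $\mathrm{span}\{\ket{r_k},\ket{q_k}\}$ is invariant under both projectors, and mop up the degenerate eigenvalues $d_k\in\{0,1\}$ together with the leftover pieces of $\mathrm{range}(Q)\cap\ker R$ and $\ker R\cap\ker Q$. The bookkeeping you flag is exactly the right bookkeeping, and your identities $\bra{r_k}Q\ket{r_j}=d_j\braket{r_k}{r_j}$ and $QRQ\ket{q_k}=d_k\ket{q_k}$ are what make the planes mutually orthogonal and prevent double counting; the phase freedom in $\ket{q_k^\perp}$ and $\ket{r_k^\perp}$ is what lets you realise the particular sign conventions displayed in the statement. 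This argument is essentially equivalent to deriving the CS decomposition of the pair of subspaces, which is the route the paper's citation suggests, so I would regard your proof as the intended one.
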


Ref.~\cite{Martin} shows that if the current state is in the block
containing the ground state of $H_t$, then the PEPS structure guarantees that
the probability of a successful projection onto $P_{t+1}$ is lower-bounded by
$\kappa(A^{t+1})^{-2}$, where $\kappa(A^{t+1})$ is the condition number of the
matrix $A^{t+1}$.
Assume for induction that we have already successfully prepared the (unique)
ground state of $H_t$. We first attempt to project from this state onto the
unique ground state of $H_{t+1}$ by measuring $\{P_{t+1},P_{t+1}^\perp\}$. If
this fails, we attempt to project back to the state we started from by
measuring $\{P_t,P_t^\perp\}$, a technique introduced by Marriott and
Watrous \cite{MW05} in the context of QMA-amplification. If this ``rewind''
measurement succeeds, then
we're back to where we started and can try again. What if the ``rewind''
measurement fails? By \cref{Jordan}, we can only be in the excited
state from the same block, so we can still try to project ``forwards''
with the same lower bound on the success probability. Thus iterating forwards
and backwards measurements until success generates a Markov chain with
successful projection onto the ground state of $H_{t+1}$ as the unique
absorbing state. Moreover, since the success probability in each step is
bounded away from zero, this converges rapidly to the desired state, allowing
us to move from the ground state of $H_t$ to the ground state of $H_{t+1}$ in
polynomial time.

\paragraph{Preparing G-injective PEPS}
Consider the algorithm of the preceding section from the perspective of
G\nobreakdash-injective PEPS. An injective PEPS can always be viewed as a
G\nobreakdash-injective PEPS for the representation $\id$ of the trivial
group. The algorithm starts from the state consisting of maximally-entangled
pairs between each site, and transforms this into the desired state by
projecting onto the ground states of a sequence of injective parent
Hamiltonians. But the initial state is none other than the
G\nobreakdash-\emph{isometric} PEPS corresponding to the representation $\id$
of the trivial group. This hints at a generalisation of the algorithm to
G\nobreakdash-injective PEPS for arbitrary groups~G: start by preparing the
corresponding G\nobreakdash-\emph{isometric} PEPS, and successively transform
this into the desired G\nobreakdash-injective PEPS by projecting onto the
ground states of the sequence of G\nobreakdash-injective parent Hamiltonians
(see \cref{alg:G-injective}).

\begin{table}[!hbtp]
  \raggedright\vspace{.5em}
  \Input{G\nobreakdash-injective $A^v$ defined on an $N$-vertex lattice;
    $\epsilon>0$.}
  \Output{$\ket{\psi}\in\vspan{\ket{A^1,\dots,A^t;K}}$ with
    probability $\geq 1-\epsilon$.}
  \begin{enumerate}
    \item Prepare corresponding G\nobreakdash-isometric PEPS
      \label[step]{step:G-isometric}
    \item \For{$t=1$ to $N$} \label[step]{step:vertexloop}
      \begin{enumerate}
        \item Measure $\{P_{t+1},P_{t+1}^\perp\}$ on $\ket{\psi}$.
          \label[step]{step:measurenext}
        \item \While{measurement outcome is $P_{t+1}^\perp$}
          \label[step]{step:failedloop}
          \begin{enumerate}
            \item Measure $\{P_t,P_t^\perp\}$.
              \label[step]{step:measureundo}
            \item Measure $\{P_{t+1},P_{t+1}^\perp\}$.
              \label[step]{step:measureredo}
          \end{enumerate}
      \end{enumerate}
  \end{enumerate}
  \caption{Preparing a G-injective PEPS. $H_t$ ($t=1 \dots N$) is the parent
    Hamiltonian for the G-injective PEPS $\ket{A^1\dots A^t}$, $P_t$ the
    projector onto its ground state subspace. Note that by specifying
    $A^v$, we are implicitly selecting a particular semi-regular
    representation of $G$.}
  \label{alg:G-injective}
\end{table}

There are, however, two obstacles to implementing this approach. (i)~The
initial G\nobreakdash-isometric PEPS can be a substantially more complicated
many-body quantum state than the trivial product of maximally-entangled pairs
we must prepare in the injective case. (ii)~Since G\nobreakdash-injective
parent Hamiltonians are topological, they have degenerate ground state
subspaces. But the Marriott-Watrous ``rewinding trick'' \cite{MW05} relies on
the measurement projectors being 1-dimensional; it breaks down in general for
higher-dimensional projectors.

It turns out that there is a direct solution to~(i). Ref.~\cite{SCPG10} proves
that, for any group $G$, the parent Hamiltonian of the G\nobreakdash-isometric
PEPS for the regular representation corresponds precisely to a quantum-double
model~\cite{kitaev2003,Schuch}. But Ref.~\cite{Aguado} shows that ground
states of quantum-double models can be generated exactly by a polynomial-size
quantum circuit. We can therefore use this circuit to efficiently prepare the
G\nobreakdash-isometric PEPS for the regular representation of $G$.
What if the representation we require is semi-regular? In fact the
G\nobreakdash-isometric PEPS for any semi-regular representation is equivalent
to the one for the regular representation (up to local isometries), by simply
regrouping the tensors. (See the appendix for a proof.)

The second obstacle is more delicate. As described above, the Marriott-Watrous
``rewinding trick'' used in the injective case~\cite{Martin} works because,
thanks to the Hamiltonians $H_t$ and $H_{t+1}$ at each step having unique
ground states, there is only ever one $2\times 2$ block involved in the
back-and-forth measurements.
%
However, in the G\nobreakdash-injective case, the Hamiltonians no longer have
unique ground states, and there are multiple $2\times 2$ blocks corresponding
to different ground states. Thus when we ``rewind'' a failed measurement, the
backwards measurement could project us back into any superposition of states
from any of the blocks corresponding to the ground state subspace. Now,
$A^{t+1}$ is only invertible on the $G$-symmetric subspace, so it necessarily
has some zero eigenvalues. Hence $\kappa(A^{t+1})=\infty$ and the lower bound
$\kappa(A^{t+1})^{-2}=0$ on the probability of a successful forward
measurement is useless. Although there may still exist some ground state
$\ket{\psi_t^1}$ of $H_t$ which has positive probability of successful forward
transition to a ground state of $H_{t+1}$, this does not rule out existence of
another ground state $\ket{\psi_t^2}$ of $H_t$ for which the probability of a
successful forward transition is~0.
In the worst case, if a forward measurement fails and we end up in a state
$\ket{\varphi_{t+1}^\perp}$, the rewinding step could have probability~1 of
transitioning back to $\ket{\psi_t^2}$, so that we remain stuck forever
bouncing back and forth between $\ket{\psi_t^2}$ and
$\ket{\varphi_{t+1}^\perp}$.

To overcome this, we must show that \emph{if we start from the
  G\nobreakdash-isometric state}, then the structure of
G\nobreakdash-injective PEPS ensures that this situation can never occur. To
prove this, we need the following key lemma:
\begin{lemma}\label{dmin}
  Let $P_t$ and $P_{t+1}$ denote two projectors on the ground state subspace
  of the partial PEPS parent Hamiltonains $H_t$ and $H_{t+1}$ for
  $\ket{A^1\dots A^t}$ and $\ket{A^1\dots A^t,A^{t+1}}$. The overlap $d_k$
  between $P_t$ and $P_{t+1}$ (cf.\ \cref{Jordan}) is lower-bounded by
  $d_{\min} \geq \kappa(A^{t+1}\vert_{S_G})^{-2}$, where
  $\kappa(A^{t+1}\vert_{S_G}):=\sigma_{\max}(A_{t+1}|_{S_G})/\sigma_{\min}(A_{t+1}|_{S_G})$
  is the condition number restricted to the $G$-symmetric subspace~$S_G$.
\end{lemma}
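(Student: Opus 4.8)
The plan is to read the Jordan overlaps $d_k$ as squared cosines of the principal angles between the two ground spaces, and then to control every such angle with the single extra map $A^{t+1}$ that distinguishes $H_{t+1}$ from $H_t$. Write $V_t=\operatorname{range}P_t=\operatorname{span}\{\ket{A^1\cdots A^t;K}\}$ and $V_{t+1}=\operatorname{range}P_{t+1}$, and let $\mathcal A$ be the operator that applies $A^{t+1}$ at vertex $t+1$ and the identity elsewhere. Because $\ket{A^1\cdots A^t,A^{t+1};K}=\mathcal A\ket{A^1\cdots A^t;K}$ for each admissible $K=(g,h)$, we have $V_{t+1}=\mathcal A\,V_t$. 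Taking $R=P_t$ and $Q=P_{t+1}$ in \cref{Jordan}, the relation $\langle q_k|r_k\rangle=\sqrt{d_k}$ identifies $\sqrt{d_k}$ with the cosine of the $k$-th principal angle between $V_t$ and $V_{t+1}$, so that $\sqrt{d_{\min}}=\min_{\ket\phi\in V_t,\,\lVert\phi\rVert=1}\lVert P_{t+1}\ket\phi\rVert$. The lemma therefore reduces to a uniform lower bound on $\lVert P_{t+1}\ket\phi\rVert$ over unit vectors $\ket\phi\in V_t$; note that this single reduction handles the degenerate, multi-block case on the same footing as the one-dimensional injective case.

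First I would pin down the one structural fact that makes the \emph{restricted} condition number appear: every vector in $V_t$ has its physical leg at vertex $t+1$ supported inside the $G$\nobreakdash-symmetric subspace $S_G$. This is where the hypothesis of starting from the G\nobreakdash-isometric state enters, since in $H_t$ the map at vertex $t+1$ is still the G\nobreakdash-isometric tensor, which by \cref{eq:proj} is (up to the reweighting $\Delta$) the projector onto $S_G$; hence the $(t+1)$ leg of each $\ket{A^1\cdots A^t;K}$, and so of any vector in their span, lies in $S_G$. Consequently $\mathcal A$ acts on $V_t$ only through $A^{t+1}|_{S_G}$, which is invertible and (w.l.o.g.\ positive\nobreakdash-definite) has singular values confined to $[\sigma_{\min},\sigma_{\max}]$ with $\sigma_{\min}=\sigma_{\min}(A^{t+1}|_{S_G})$, $\sigma_{\max}=\sigma_{\max}(A^{t+1}|_{S_G})$. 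In particular $\mathcal A$ is injective on $V_t$, so $\dim V_{t+1}=\dim V_t$ and the Jordan decomposition matches the two ground spaces block by block, with no spurious $d_k=0$ arising from a dimension mismatch.

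The remaining estimate is then short. For a unit $\ket\phi\in V_t$ put $\ket{\phi'}:=\mathcal A\ket\phi/\lVert\mathcal A\ket\phi\rVert\in V_{t+1}$. Since the $(t+1)$ leg of $\ket\phi$ lies in $S_G$, where $\sigma_{\min}\mathbbm 1\le A^{t+1}\le\sigma_{\max}\mathbbm 1$, positive\nobreakdash-definiteness gives $\langle\phi|\mathcal A|\phi\rangle\ge\sigma_{\min}$ and $\lVert\mathcal A\ket\phi\rVert\le\sigma_{\max}$, whence
\begin{equation*}
  \lVert P_{t+1}\ket\phi\rVert
  \;\ge\;\lvert\langle\phi'|\phi\rangle\rvert
  \;=\;\frac{\langle\phi|\mathcal A|\phi\rangle}{\lVert\mathcal A\ket\phi\rVert}
  \;\ge\;\frac{\sigma_{\min}}{\sigma_{\max}}
  \;=\;\kappa(A^{t+1}|_{S_G})^{-1}.
\end{equation*}
Minimising over $\ket\phi$ gives $\sqrt{d_{\min}}\ge\kappa(A^{t+1}|_{S_G})^{-1}$, which is the claimed $d_{\min}\ge\kappa(A^{t+1}|_{S_G})^{-2}$.

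The hard part will be the structural step, not the final inequality: one has to argue rigorously that the passage from $H_t$ to $H_{t+1}$ acts on the whole (degenerate) ground space as the single invertible map $A^{t+1}|_{S_G}$, so that the conditioning is governed by $\kappa(A^{t+1}|_{S_G})$ rather than by the condition number of $A^{t+1}$ on the full local space -- the latter being infinite precisely because $A^{t+1}$ kills the complement of $S_G$, as flagged in the discussion preceding the lemma. Establishing the $S_G$\nobreakdash-support property cleanly (e.g.\ directly from the G\nobreakdash-isometric tensor at the untouched vertex $t+1$) is what converts the useless bound $\kappa(A^{t+1})^{-2}=0$ into the finite $\kappa(A^{t+1}|_{S_G})^{-2}$.
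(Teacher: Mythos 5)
Your proposal is correct and follows essentially the same route as the paper: both use the candidate state $A^{t+1}\ket{\psi_t}/\lVert A^{t+1}\ket{\psi_t}\rVert\in\ker H_{t+1}$ as a variational lower bound on the overlap, then exploit the $S_G$-support of the $H_t$ ground space and positive-definiteness of $A^{t+1}|_{S_G}$ to obtain the Rayleigh-quotient bound $\sigma_{\min}/\sigma_{\max}$. Your phrasing via principal angles and $\lVert P_{t+1}\ket{\phi}\rVert$ is just a reformulation of the paper's $\min\max\abs{\braket{\psi_t}{\psi_{t+1}}}^2$, and your explicit justification of the $S_G$-support from the G-isometric tensor at the untouched vertex is a welcome (if slightly more detailed) version of the paper's one-line appeal to the definition of G-injectivity.
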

\begin{proof}
  The minimum overlap $d_{\min}$ between projectors $P_t$ and $P_{t+1}$ is
  given by
  \begin{equation}\label{eq:dmin}
    d_{\min}
      = \min_{\ket{\psi_t}} \max_{\ket{\psi_{t+1}}}
        \Abs{\braket{\psi_t}{\psi_{t+1}}}^2,
  \end{equation}
  where $\ket{\psi_t}$ and $\ket{\psi_{t+1}}$ are states in the respective
  ground state subspaces $\ker H_t$ and $\ker H_{t+1}$.

  Now, $\ker H_t$ is spanned by the partially constructed PEPS
  $\ket{A^1,\dots,A^t;K}$, with different boundary conditions $K$ giving
  different ground states). Thus we can decompose any $\ket{\psi_t}\in\ker
  H_t$ as a linear combination of partial PEPS: $\ket{\psi_t} = \sum c_k
  \ket{A^1,\dots,A^t;K^k}$.

  $H_{t+1}$ is obtained from $H_t$ by replacing all the G\nobreakdash-isometric local
  Hamiltonian terms at one vertex with the G\nobreakdash-injective terms. So applying
  $A_{t+1}$ to \emph{any} $\ket{A^1,\dots,A^t;K}$ takes us to the next ground
  state subspace. Therefore, the state
  \begin{equation}
    \ket{\varphi_{t+1}} =
    \frac{A_{t+1}\ket{\psi_t}}
         {\sqrt{\bra{\psi_t}A_{t+1}^\dagger A_{t+1}\ket{\psi_t}}}
  \end{equation}
  is contained in $\ker H_{t+1}$. Choosing $\ket{\psi_{t+1}} =
  \ket{\varphi_{t+1}}$ in \cref{eq:dmin}, we obtain the lower bound
  \begin{equation}
    d_{\min}
    \geq \min_{\ket{\psi_t}} \Abs{\braket{\psi_t}{\varphi_{t+1}}}^2
    \geq \min_{\ket{\psi_t}} \frac{\Abs{\bra{\psi_t}A_{t+1}\ket{\psi_t}}^2}
                               {\bra{\psi_t}A_{t+1}^\dagger A_{t+1}\ket{\psi_t}}.
  \end{equation}

  It is immediate from the definition of G\nobreakdash-injective PEPS that the ground
  states of $H_t$ are symmetric, so that the projector $P_t$ is supported on
  the symmetric subspace $S_G$. Thus the minimisation is over symmetric states
  and, recalling that w.l.o.g.\ $A_t$ is positive-semidefinite, we obtain the
  claimed bound
  \begin{equation}
    d_{\min}
    \geq \min_{\ket{\psi_t}}
      \frac{\bra{\psi_t}\left. A_{t+1} \right|_{S_G}\ket{\psi_t}^2}
           {\bra{\psi_t}\left . A^2_{t+1} \right|_{S_G}\ket{\psi_t}}
    \geq \frac{\sigma_{\min}\left(\left . A_{t+1}\right|_{S_G}\right)^2}
              {\sigma_{\max}\left(\left . A_{t+1} \right|_{S_G}\right)^2},
  \end{equation}
  by the variational characterisation of eigenvalues.
\end{proof}

\paragraph{Runtime}
We are now in a position to establish the runtime of the algorithm given in
\cref{alg:G-injective}. We start by bounding the failure probability of
growing the partial PEPS by a single site.

\begin{figure}[ht]
  \begin{center}
    \includegraphics[scale=1]{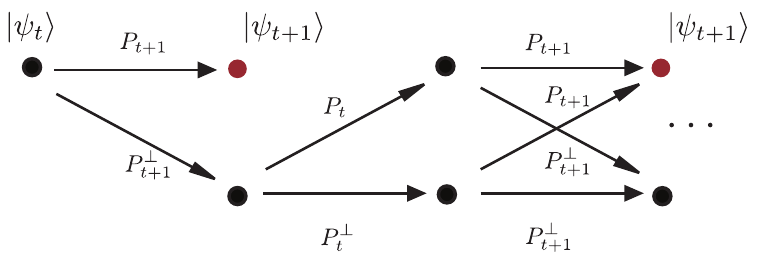}
    \caption{The sequence of outcomes of the binary measurements
      $\{P_t,P_t^\perp\}$ and $\{P_{t+1},P_{t+1}^\perp\}$. We are initially in
      an eigenstate $\ket{\psi_t}$ of the projector $P_t$, and want to
      transition to a state in the subspace $P_{t+1}$. With non-zero
      probability, the first $\{P_{t+1},P_{t+1}^\perp\}$ succeeds with outcome
      $P_{t+1}$. If it fails, we have prepared a state in the $P_{t+1}^\perp$
      subspace. We ``unwind'' the measurement by measuring $\{P_t,P_t^\perp\}$
      again. Upon repeating the $\{P_{t+1},P_{t+1}^\perp\}$ measurement, we
      again have a non-zero probability of successfully obtaining the
      $P_{t+1}$ outcome. If we fail again, we repeat the procedure until
      success.}
    \label{fig:measurement}
  \end{center}
\end{figure}

\begin{lemma}\label{pfailLabel}
  The measurement sequence depicted in \cref{fig:measurement} with the two
  projective measurements $\{P_t,P_t^\perp\}$ and $\{P_{t+1},P_{t+1}^\perp\}$
  has a failure probability bounded by
  \begin{equation}\label{failBnd}
    p_{fail}(m) < \frac{1}{2 \; d_{\min} m }
  \end{equation}
  after $m$-subsequent measurement steps, where $d_{\min} = \min_k d_k$ is the
  minimal overlap between the eigenstates of $P_t$ and $P_{t+1}$.
\end{lemma}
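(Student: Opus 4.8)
The plan is to use \cref{Jordan} to reduce the whole measurement process to a classical absorbing random walk, then turn a bound on its expected running time into the stated tail bound via Markov's inequality. First I would apply \cref{Jordan} with $R=P_{t+1}$ and $Q=P_t$ to block-diagonalise both projectors into the orthogonal $2\times2$ blocks indexed by $k$, with overlaps $d_k$. The key structural observation is that \emph{all four} measurement operators $P_t,P_t^\perp,P_{t+1},P_{t+1}^\perp$ are block-diagonal in this decomposition, so each block is an invariant subspace and the process decouples across blocks. Writing the initial state (which lies in the range of $P_t$ by the inductive hypothesis) as $\ket{\psi_t}=\sum_k c_k\ket{q_k}$, the probability of any fixed sequence of measurement outcomes equals the $\abs{c_k}^2$-weighted sum of the corresponding single-block probabilities, because the product of projectors realising that outcome sequence acts blockwise on the orthogonal components $c_k\ket{q_k}$. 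Hence $p_{fail}(m)=\sum_k\abs{c_k}^2 f_k(m)$, where $f_k(m)$ is the probability that the first $m$ forward measurements all fail inside block $k$ (interpreting a ``step'' as one forward $\{P_{t+1},P_{t+1}^\perp\}$ measurement, possibly preceded by a rewind). It then suffices to bound $f_k(m)$ in a single block and invoke $\sum_k\abs{c_k}^2=1$ with $d_k\ge d_{\min}$.

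Within one block I would read the transition probabilities straight off the relations in \cref{Jordan}. Starting from $\ket{q_k}$, a forward measurement succeeds (lands in $\ket{r_k}$) with probability $d_k$ and otherwise leaves the state in $\ket{r_k^\perp}$; the subsequent rewind $\{P_t,P_t^\perp\}$ sends $\ket{r_k^\perp}$ back to $\ket{q_k}$ with probability $1-d_k$ and to $\ket{q_k^\perp}$ with probability $d_k$, and from $\ket{q_k^\perp}$ a forward measurement succeeds with probability $1-d_k$. This is a three-state absorbing chain on $\{\ket{q_k},\ket{q_k^\perp},\mathrm{success}\}$. Letting $F$ be the number of failed forward measurements before absorption, a first-step analysis gives two coupled linear equations for $\mathbb{E}[F]$ started from $\ket{q_k}$ and from $\ket{q_k^\perp}$; solving them collapses to the clean value $\mathbb{E}[F]=1/(2d_k)$ when started from $\ket{q_k}$. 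Here \cref{dmin} is exactly what keeps the chain well-behaved: it guarantees $d_k\ge\kappa(A^{t+1}|_{S_G})^{-2}>0$ for every block, so no block is ever trapped with zero forward-success probability.

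Finally, since $f_k(m)=\Pr[F\ge m]$ with $F$ a nonnegative integer random variable, Markov's inequality gives $f_k(m)\le\mathbb{E}[F]/m=1/(2d_k m)$. Substituting into $p_{fail}(m)=\sum_k\abs{c_k}^2 f_k(m)$ and using $d_k\ge d_{\min}$ yields $p_{fail}(m)\le\frac{1}{2d_{\min}m}\sum_k\abs{c_k}^2=\frac{1}{2d_{\min}m}$, with the inequality strict because Markov's bound is not attained for this geometric-type distribution of $F$. The step I expect to be the main obstacle is the decoupling over Jordan blocks: one must verify that, even though we never learn which block we are in, block-diagonality of all four projectors makes $p_{fail}(m)$ an \emph{exact} convex combination of single-block failure probabilities, so that the worst-case overlap $d_{\min}$ governs the bound rather than some uncontrolled interference between blocks. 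Once the chain is correctly set up, the per-block expectation computation and the Markov step are routine.
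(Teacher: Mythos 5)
Your proof is correct, and it reaches the bound by a genuinely different route in its second half. The skeleton is shared with the paper: both arguments rest on the observation that all four measurement projectors are block-diagonal in the Jordan decomposition, so $p_{\mathrm{fail}}(m)$ is an exact convex combination $\sum_k\abs{c_k}^2 f_k(m)$ of per-block failure probabilities (the paper realises this via the commutativity $[Q_0R_sQ_0,Q_0R_pQ_0]=0$ and a binomial resummation; your direct "all operators preserve the blocks, cross terms vanish under the trace" argument is equally valid and is indeed the point you rightly flag as needing care). Where you diverge is inside a single block: the paper derives the exact closed form $f_k(m)=(1-d_k)\bigl(1-2d_k(1-d_k)\bigr)^m$ and bounds it analytically via $1-x\le e^{-x}$ and $xe^{-x}\le 1$, whereas you model the block as a three-state absorbing Markov chain, compute $\mathbb{E}[F]=1/(2d_k)$ by first-step analysis (your transition probabilities read off \cref{Jordan} are correct, and the computation $a=(1-d)S$, $b=dS$, $S=1/(2d(1-d))$ checks out), and finish with Markov's inequality $\Pr[F\ge m]\le\mathbb{E}[F]/m$. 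The two routes give the identical bound $1/(2d_km)$; yours is more elementary and probabilistic, the paper's yields the exact failure distribution as a by-product. Two small remarks: your appeal to \cref{dmin} is not needed for this lemma (it is only invoked later, in \cref{runtime}, to convert $d_{\min}$ into $\kappa_G^{-2}$ -- the present lemma is vacuously true when $d_{\min}=0$); and there is a harmless off-by-one between your count of forward measurements and the paper's indexing of "$m$ steps", which if anything makes your bound marginally stronger.
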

\begin{proof}
  Let $Q_1 = P_{t+1}$, $Q_0 = P_{t+1}^\perp$ and $R_1 = P_{t}$, $R_0 =
  P_{t}^\perp$, in accordance with the notation in \cref{Jordan}. Hence $Q_1$
  projects on to the new ground state subspace, whereas the $R_1$ is the
  projector on to the old ground state subspace. If we start in some state
  $\ket{\psi} = R_1\ket{\psi}$, the probability of failure of the measurement
  sequence depicted in \cref{fig:measurement} after $m$ steps is
  $p_{\mathrm{fail}}(m) = \sum_{s_1,\dots,s_m} \tr(Q_0 R_{s_m} Q_0 \dots
  R_{s_1} Q_0 \proj{\psi} Q_0 R_{s_1} \ldots Q_0 R_{s_{m}}
  Q_0)$.
  Note that $[Q_0R_sQ_0, Q_0R_pQ_0] = 0$ for all $s,p$. We can therefore
  rearrange this to express $p_{\mathrm{fail}}(m)$ as the sum
  \begin{multline*}
    \sum_{k=0}^m \binom{m}{k} \bra{\psi}
      \left (Q_0 R_{0} Q_0\right)^{2k}
      \left (Q_0 R_{1} Q_0\right)^{2(m-k)}
      \ket{\psi}\\
    = \bra{\psi} \left(
        \left(Q_0 R_0 Q_0\right)^2
        + \left(Q_0 R_1 Q_0\right)^2
      \right)^m \ket{\psi}.
  \end{multline*}
  If we work in the eigenbasis of $Q_1$, the individual $2\times2$
  block matrices take the form
  \begin{equation}
    Q_1^k = \begin{pmatrix} 1 & 0 \\ 0 & 0\end{pmatrix}, \quad
    R_1^k = \begin{pmatrix}
      d_k & \sqrt{d_k(1 - d_k)} \\
      \sqrt{d_k(1 - d_k)} & 1-d_k
    \end{pmatrix}.
  \end{equation}
  Since $\ket{\psi}$ is left invariant by $R_1$, we have that $\ket{\psi} =
  \sum_k c_k \ket{r_k}$, where in this basis every $\ket{r_k} = (\sqrt{d_k}
  \quad \sqrt{1-d_k})^T$ by \cref{Jordan}. We are therefore left with
  \begin{equation}
    p_{\mathrm{fail}}(m)
      = \sum_k \abs{c_k}^2\; (1- d_k)\left( 1 - 2d_k(1-d_k)\right)^m,
  \end{equation}
  with $d_k \in [0,1]$ and $\sum_k \abs{c_k}^2 =1$.

  Since $(1-x) \leq e^{-x}$, we may bound $(1-d_k)(1 - 2d_k(1-d_k))^m \leq
  (1-d_k)e^{- 2md_k(1-d_k)}$. Furthermore, we have that $(1-d_k) e^{-
    2md_k(1-d_k)} \leq 1/2md_k$ by Taylor expansion. If we now choose the
  largest factor $(2 m d_k)^{-1} \leq (2 m d_{\min})^{-1}$, we can bound the
  total failure probability by \cref{failBnd}.
\end{proof}

We use this to bound the overall runtime.
\begin{theorem}[Runtime]\label{runtime}
  Let $A^{v}$ be $G$-symmetric tensors defining a PEPS on an $N$-vertex
  lattice. A state in the subspace spanned by the corresponding
  G\nobreakdash-injective PEPS $\ket{A^1\dots A^N;K}$ can be prepared on a
  quantum computer with probability $1-\epsilon$ in time $\order{N^4
    \kappa_G^2 \Delta^{-1} \epsilon^{-1}}$, with additional classical
  processing $\order{N d^6}$, where $\Delta = \min_t(\Delta_t)$ is the minimal
  spectral gap of the family of parent Hamiltonians $H_t$ for $\ket{A^1\dots
    A^t}$ ($t=1\dots N$), and $\kappa_G = \max_t \kappa(A^t|_{S_G})$.
\end{theorem}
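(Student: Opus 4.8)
The plan is to assemble the stated runtime from three multiplicative ingredients: the number of iterations of the outer loop over vertices, the number of forward/backward measurement steps needed per vertex to push the failure probability below the target, and the cost of a single such measurement step. The first is immediate: \cref{step:vertexloop} of \cref{alg:G-injective} grows the partial PEPS one site at a time, so there are exactly $N$ stages.

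For the second ingredient I would combine \cref{pfailLabel} with \cref{dmin}. Since $\kappa_G = \max_t \kappa(A^t|_{S_G})$, \cref{dmin} gives $d_{\min} \geq \kappa_G^{-2}$ uniformly in $t$, so the per-stage bound of \cref{pfailLabel} becomes $p_{\mathrm{fail}}(m) < \kappa_G^2/(2m)$. To reach total failure probability $\epsilon$ over all $N$ stages, I would allocate $\epsilon/N$ to each stage by a union bound and solve $\kappa_G^2/(2m) \leq \epsilon/N$, which demands $m = \order{N\kappa_G^2\epsilon^{-1}}$ measurement steps per stage and hence $\order{N^2\kappa_G^2\epsilon^{-1}}$ steps in total. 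This step crucially relies on the fact that, because we begin from the G\nobreakdash-isometric state, \cref{dmin} forbids any Jordan block with vanishing overlap $d_k$; it is exactly this uniform lower bound that excludes the failure mode flagged earlier, in which the process stays stuck forever bouncing between $\ket{\psi_t^2}$ and $\ket{\varphi_{t+1}^\perp}$.

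For the third ingredient, each measurement $\{P_t,P_t^\perp\}$ and $\{P_{t+1},P_{t+1}^\perp\}$ is realized by coherent phase estimation on $e^{-iH_t\tau}$ with the eigenvalue register thresholded at $\Delta_t \geq \Delta$. Resolving eigenvalues on the scale of the gap fixes the phase-estimation precision, and therefore the total evolution time, to be $\order{\Delta^{-1}}$; simulating the geometrically local parent Hamiltonian $H_t$, a sum of $\order{N}$ terms, over this time via the techniques of \cite{BACS07} contributes the remaining factor of $N$, so that one measurement step costs $\order{N^2\Delta^{-1}}$. Multiplying the three ingredients yields $N\cdot \order{N\kappa_G^2\epsilon^{-1}}\cdot \order{N^2\Delta^{-1}} = \order{N^4\kappa_G^2\Delta^{-1}\epsilon^{-1}}$. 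The one-time preparation of the initial G\nobreakdash-isometric PEPS in \cref{step:G-isometric}, carried out by the polynomial-size quantum-double circuit of \cite{Aguado}, is dominated by this bound. The additive classical overhead $\order{Nd^6}$ comes from the local linear algebra on the $d$-dimensional tensors at each of the $N$ vertices, namely the polar decomposition and singular values defining $A^v|_{S_G}$ and the local parent-Hamiltonian projectors fed into the phase estimation.

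I expect the main obstacle to lie in the third ingredient, specifically in controlling the error of the approximate measurements. Phase estimation implements $\{P_t,P_t^\perp\}$ only up to a finite-precision eigenvalue estimate, and I would need to show that this error can be kept on the scale of $\Delta$ cheaply enough that, first, the implemented two-outcome measurement is close in operator norm to the ideal projective measurement, so the block-diagonal analysis of \cref{pfailLabel} and the overlap bound of \cref{dmin} continue to govern the dynamics; and second, the accumulated error over all $\order{N^2\kappa_G^2\epsilon^{-1}}$ measurements does not spoil the final $1-\epsilon$ success guarantee. This fidelity-versus-cost balance is precisely where the $\Delta^{-1}$ dependence enters and is the delicate part of the argument; once it is in place, the remainder is bookkeeping over the three factors above.
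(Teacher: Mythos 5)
Your proposal is correct and follows essentially the same route as the paper: $N$ stages, $m = \order{N\kappa_G^2\epsilon^{-1}}$ rewinding measurements per stage obtained by combining \cref{pfailLabel} with \cref{dmin} (the paper uses $(1-x)^N \geq 1-Nx$ where you use a union bound — the same estimate), a per-measurement phase-estimation cost of $\tilde{O}(N^2\Delta^{-1})$, and the observation that the initial \cite{Aguado} circuit and the classical bookkeeping are subdominant. The only quibble is arithmetic bookkeeping in your third ingredient: an evolution time of $\order{\Delta^{-1}}$ times one factor of $N$ for simulating the $\order{N}$-term Hamiltonian yields $N\Delta^{-1}$, not the $N^2\Delta^{-1}$ you (and the paper, citing \cite{HHL09}) assert; the second factor of $N$ comes from needing to resolve the gap relative to $\norm{H_t}=\order{N}$.
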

\begin{proof}
  The algorithm in \cref{alg:G-injective} first prepares the initial
  G\nobreakdash-isometric PEPS, which can be done exactly in time
  $\order{N\log N}$ \cite{Aguado}, and then transforms this step by step into
  the G\nobreakdash-injective PEPS, with one step for each of the $N$ vertices
  of $\mathcal{G}$. Each step has a probability of failure
  $p_{\mathrm{fail}}(m)$ if we repeat the back-and-forth measurement scheme
  $m$ times. We need to ensure that the total success probability is
  lower-bounded by $(1- p_{\mathrm{fail}}(m))^N \geq 1-\epsilon$. Since
  $(1-x)^N \geq 1 - Nx$, we can use \cref{pfailLabel} to bound
  \begin{equation}
    \left(1-p_{\mathrm{fail}}(m)\right)^N \geq 1 - \frac{N}{2m d_{\min}},
  \end{equation}
  so we want $N/2m d_{\min} \leq \epsilon$. Since $d_{\min} \geq
  \kappa_G^{-2}$ by \cref{dmin}, we choose $m \geq N\kappa_G^2/2\epsilon$ at
  each step. We therefore need to perform $\order{N^2
    \kappa_G^2\epsilon^{-1}}$ quantum phase estimation procedures, each of
  which has runtime $\tilde{O}(N^2/\Delta^{-1})$ to ensure that we are able to
  resolve the energy gap of the parent Hamiltonian \cite{HHL09}. (Note that
  the notation $\tilde{O}(\cdot)$ suppresses more slowly growing terms such as
  $\exp(\sqrt{\ln{N/\Delta}})$.) The classical bookkeeping required to keep
  track of the Hamiltonians is the same as in \cite{Martin}. Putting all this
  together, we arrive at the total runtime stated in the theorem.
\end{proof}

\paragraph{Discussion}
We have shown how the Marriott-Watrous rewinding technique combined with the
unique structure of G\nobreakdash-injective PEPS can be used to transition
from one state to the next (see \cref{alg:G-injective}), successively building
up the desired quantum state even when that state has topological order and
the ground states are degenerate. There are a number of alternative techniques
that could potentially be used to achieve the same thing. In each case, the
key to proving an efficient runtime is our \cref{dmin}. In many cases the
existing results in the literature assume non-degenerate ground states, so
would need to be generalised before they would apply to the topologically
degenerate ground states considered here.

Standard adiabatic state preparation has the disadvantage that it requires a
polynomial energy gap along a continuous path joining the initial Hamiltonian
with the final one. But the ``jagged adiabatic lemma'' of Ref.~\cite{ATS07}
shows that such a path connecting a discrete set of gapped Hamiltonians always
exists if the ground states are unique and each ground state has sufficient
overlap with the next. The latter is precisely what we prove in \cref{dmin}.
For the `injective' case of~\cite{Martin}, this is sufficient to show that
adiabatic state preparation is an efficient alternative to the
``rewinding trick''. Our results suggest that the jagged adiabatic lemma
could be generalised to the case of non-unique ground states.

More general are the methods of~\cite{BKS10}, which subsume the jagged
adiabatic lemma and the Marriott-Watrous technique. The results in~\cite{BKS10}
do not immediately apply to degenerate ground states, but if they can be
generalised they could potentially improve the polynomial dependence on the
required error probability to a logarithmic one.
Another potential alternative is the recent quantum rejection sampling
technique of~\cite{ORR12}, which gives quadratic improvement over
Marriott-Watrous rewinding by a clever use of amplitude amplification.
Finally, the spectral gap amplification technique of~\cite{SB11}, which cites
injective PEPS preparation~\cite{Martin} as a potential application, may also
be applicable. In all cases, the techniques would first need to be generalised
to handle degenerate ground states. If this can be done, our \cref{dmin} would
imply efficiency of the resulting algorithm.

The conditions required for efficient preparation in \cref{runtime}
(inverse-polynomial scaling with system size of the spectral gaps of the
partial parent Hamiltonians and polynomial scaling of the condition numbers of
the PEPS projectors) are very reminiscent of the conditions (local gap and
local topological quantum order) required for stability of the spectral gap of
local Hamiltonians~\cite{michalakis2012}. It is also conjectured that the
spectral gap of the parent Hamiltonian should be closely related to the
condition number of the PEPS projectors. It would be interesting to understand
better the relationships between these various conditions.

The technique we introduced, of constructing a complex many-body quantum state
by starting from an easily-constructable state and successively transforming
it into the desired state, is very general. Although we have applied it here
to G\nobreakdash-injective PEPS, as a class of states including many important
topological quantum states such as the RVB state, our algorithm can be
generalised to other classes of tensor network states, such as string-net
models~\cite{string-nets} and models constructed from Hopf
algebras~\cite{Oliver}.

\paragraph{Acknowledgements}
DPG and TSC would like to thank the hospitality of the Centro de Ciencias
Pedro Pascual and the Petronilla facility in Benasque, where part of this work
was carried out. TSC is funded by the Juan de la Cierva program of the Spanish
science ministry. DPG and TSC are supported by Spanish grants QUITEMAD,
I-MATH, and MTM2008-01366. MS is supported by Austrian SFB project FoQuS
F4014. KT is funded by the Erwin Schr\"odinger fellowship, Austrian Science
Fund (FWF): J~3219-N16. FV is supported by EU grants QUERG and by the Austrian
FWF SFB grants FoQuS and ViCoM.

\bibliography{G-injective}

\balancecolsandclearpage
\appendix

\section{Appendix: G-isometric PEPS}

In this appendix, we use the argument described in \cite{SCPG10} for the Toric
Code and RVB states, generalised here to arbitrary G\nobreakdash-isometric PEPS, to see
that the G\nobreakdash-isometric PEPS for any semi-regular representation is equivalent
to the one for the regular representation. Let us start with a semi-regular
representation $U_g$ of a group $G$, and let
\begin{equation}
  B = \frac{1}{\abs{G}}\sum_g \Delta\bar{U}_g\otimes \Delta\bar{U}_g
      \otimes \Delta U_g\otimes \Delta{U}_g\,.
\end{equation}
We will show how $B$ can indeed be seen as the G\nobreakdash-isometric PEPS
corresponding to the regular representation -- possibly composed with an
isometry which embeds the initial Hilbert space into a sufficiently large one.
As explained in the main text, the latter can be prepared efficiently on a
quantum computer by other means.

\begin{figure}[!htbp]
  \begin{center}
    \includegraphics[scale=1.2]{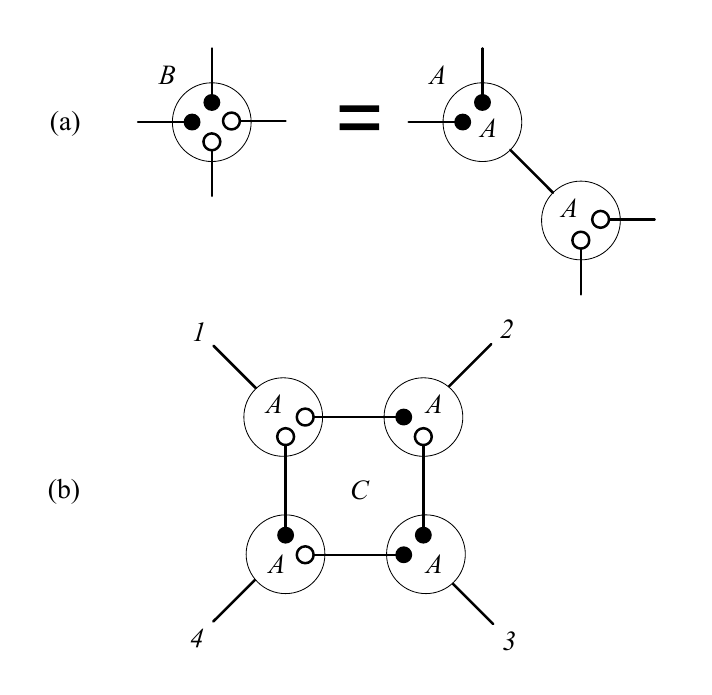}
    \caption{(a) illustrates the decomposition of the original tensor in the
      tensors $A$. We mark in white the bonds in which we have $U_g$ and in
      black those in which we have $\bar{U}_g$. (b) illustrates the new way of
      grouping the tensors to get a G-isometric PEPS, called C. The bonds of
      this new tensor are numbered clockwise as in the figure.}
    \label{fig:app}
  \end{center}
\end{figure}

To show this, we decompose the tensor $B$ into two tensors of the form
$A=(\sqrt{\abs{G}})^{-1}\sum_g \Delta U_g\otimes \Delta{U}_g\otimes\ket{g}$
(where $U_g$ and $U_g$ are interchanged as needed, as shown in
\cref{fig:app}(a)). By regrouping these new tensors, we obtain a new PEPS
decomposition of the same state, where now the bond dimension is $\abs{G}$
(\cref{fig:app}(b)). The resulting tensor $C$ (\cref{fig:app}(c)), as a map
from the virtual to the physical indices, is given by
\begin{equation}
  C:\ket{g_1g_2g_3g_4} \mapsto \frac{1}{\abs{G}^2}
    \Delta^2 U_{g_1g_2^{-1}}\ox\Delta^2 U_{g_2g_3^{-1}}
    \ox\Delta^2 U_{g_4g_3^{-1}}\ox\Delta^2 U_{g_1g_4^{-1}}.
\end{equation}
By calling $g=g_1^{-1}g'_1$ and using \cref{eq:delta} it is not difficult to
see that
\begin{multline}
  \BraKet{g'_1g'_2g'_3g'_4}{C^{\dagger}C}{g_1g_2g_3g_4}\\
  = \frac{1}{\abs{G}^4}\prod_{r=1}^2
    \tr(\Delta^4U_{g_r g_{r+1}^{-1} g'_{r+1} g_r^{'-1}})
    \prod_{r=3}^4 \tr(\Delta^4U_{g_{r+1} g_{r}^{-1} g'_{r} g_{r+1}^{'-1}})
\end{multline}
equals $1$ if and only if there exist $g$ such that $g_ig =g'_i$ for all $i$.
Otherwise, the expression is identically zero.

Therefore $C^\dagger C = (\abs{G})^{-1}\sum_g R_g^{\otimes 4}$ for the regular
representation $R_g$, hence the new PEPS $C$ is the G\nobreakdash-isometric
PEPS corresponding to the regular representation.

\end{document}